\documentclass[runningheads]{llncs}

\usepackage{microtype}


\usepackage[normalem]{ulem}
\usepackage[utf8]{inputenc}

\usepackage{mathtools} 
\usepackage{wrapfig} 

\usepackage{xcolor}
\usepackage{xspace}
\usepackage{fmtcount} 
\usepackage{xstring}
\usepackage{cite}
\usepackage{amssymb}

\usepackage{nicefrac}
\newcommand{\nf}[2]{\ensuremath{\nicefrac{#1}{#2}}}

\usepackage{hyperref}

\newcommand{\ie}{i.\,e.\xspace}
\newcommand{\eg}{e.\,g.,\xspace}
\newcommand{\Wlog}{W.\,l.\,o.\,g.\xspace}

\newcommand{\fire}{fire\xspace}
\newcommand{\burn}{consumption\xspace} 
\newcommand{\ratio}{consumption-ratio\xspace}
\newcommand{\crews}{barrier system\xspace}

\newcommand{\wall}{barrier\xspace}
\newcommand{\walls}{barriers\xspace}

\newcommand{\BF}[1]{\mathcal{C}_{#1}}
\newcommand{\QF}[1]{\mathcal{Q}_{#1}}

\newcommand{\consumptionRatio}[1][]{\ensuremath{\QF{}^{#1}(t)}\xspace}
\newcommand{\A}{a}
\newcommand{\B}{b}
\newcommand{\C}{c}
\newcommand{\D}{d}
\newcommand{\startWall}{\ensuremath{s}\xspace}

\begin{document}
\title{Geometric Firefighting in the Half-plane\thanks{This work has been supported by DFG grant Kl 655/19 as part of a DACH project.} \thanks{This is a pre-print of an article published in \textit{Algorithms and Data Structures - 16th International Symposium, {WADS} 2019}.
}}
%
%
\author{Sang-Sub Kim
	\and Rolf Klein
	\and David Kübel
	\and Elmar Langetepe
	\and Barbara Schwarzwald
}
%
\authorrunning{S. Kim, R. Klein, D. Kübel, E. Langetepe, and B. Schwarzwald}
%
\institute{Department of Computer Science, University of Bonn, 53115 Bonn, Germany \\
	\email{\{sang-sub,rolf.klein,dkuebel,schwarzwald\}@uni-bonn.de}\\
	\email{elmar.langetepe@cs.uni-bonn.de}}
\maketitle              

\begin{abstract}

In 2006, Alberto Bressan~\cite{b-dicff-07} suggested the following problem. Suppose a circular fire spreads in the Euclidean plane at unit speed. The task is to build, in real time, barrier curves to contain the fire. At each time $t$ the total length of all barriers built so far must not exceed $t \cdot v$, where $v$ is a speed constant. How large 
a speed $v$ is needed? He proved that speed $v>2$ is sufficient, and that $v>1$ is necessary.
This gap of $(1,2]$ is still open. The crucial question seems to be the following. {\em When trying to contain a fire, should one build, at maximum speed, the enclosing barrier, or does it make sense to spend some time on placing extra delaying barriers in the fire's way?}
We study the situation where the fire must be contained in the upper $L_1$ half-plane by an infinite horizontal barrier to which vertical line segments may be attached as delaying barriers.
Surprisingly, such delaying barriers are helpful when properly placed. We prove that speed $v=1.8772$ is sufficient, while $v >1.66$ is necessary.

\keywords{barrier, firefighting, geodesic circle}

\end{abstract}
\section{Introduction and problem statement}
Fighting wildfires is a difficult problem, involving many parameters one can neither foresee nor control. But there seem to be two main techniques firefighters employ, namely to extinguish the fire by dropping water or chemicals from aircraft, and to prevent the fire from spreading further by firebreaks.
In 2006, Alberto Bressan~\cite{b-dicff-07} developed a rather general model for containing a fire by means of barrier curves that must be built in real time, subject to velocity constraints. Barriers are impenetrable by fire, they do not burn and cannot be moved once built.

In addition to general optimality results \cite{bbfj-bsfcp-08,bw-msbph-09,bw-osibp-12}, in~\cite{b-dicff-07} Bressan proposed the following problem. Suppose a circular fire spreads in the plane at unit speed. In real time, barrier curves must be built to contain it. At each time $t$, the total length of barriers built so far must not exceed $t$ times $v$, for some velocity constant $v$.
The question is how large a velocity is needed to contain the fire.
Bressan showed that $v>1$ is necessary and that $v>2$ is sufficient; see also~\cite{kl-cgc-16} for short proofs. 
He conjectured that speed $v=2$ is necessary.
But the gap $(1,2]$ is still open, even though a 500 USD reward has been offered~\cite{b-awa-11} in 2011.

It seems that the difficulty lies with the following question. {\em To contain a fire, should one build an enclosing barrier at maximum speed, or is it better to invest some time in building extra delaying barriers that will not be part of the final enclosure but can slow the fire down during construction?} 
If delaying barriers could be shown to be useless, Bressan's proof of the lower bound~1 could be easily extended to prove his conjecture, the lower bound of~2.
In fact they consider a special variant in \cite{bw-msbph-09}, where the fire spreads in a half plane. In that case they can construct an optimal strategy without delaying barriers, that encloses the fire between the boundary of the half plane and the barrier curve.

To study the effectiveness of delaying barriers we study a different setting where an infinite horizontal \wall has to be built to contain the fire in the upper half-plane, instead of the interior of a closed barrier curve. 
To this horizontal \wall, vertical line segments may be attached as delaying barriers. Without vertical barriers speed $v=2$ is necessary and sufficient to build the horizontal \wall.
While it takes extra time to build vertical barriers, they offer some respite because the expanding fire has to overcome them before it reaches the horizontal \wall again. 
To simplify matters further we are working in the $L_1$ norm, so that distances are free of square roots. Also, all intersections of the fire's boundary with the barriers advance at unit speed.

Our main result is the following. In our setting, speed $v>1.66$ is necessary, and, 
with a careful placement of delaying barriers, speed $v=1.8772$ is sufficient. 
While this result does not disprove Bressan's conjecture it casts a new light on the problem by showing that building delaying barriers can be helpful. 
Also, the gap we leave open is smaller than the one for the original containment problem. 

Previous, but weaker results have been presented at EuroCG'18~\cite{kkls-ffp-18}.

\subsection{Related work}
Among theoretical work on {\em extinguishing} a fire, the ``lion and man'' problem stands out~\cite{dsz-ovlmp-08,bkns-eosit-09,bggk-hmlnc-09,k-rpffp-18}. Here, $r$ fighters are tasked with quenching a fire in an $n \times n$ grid. In every step, fighters and fire move simultaneously to adjacent cells, subject to certain rules.
While $r=n$ fighters can easily extinguish the fire, $\lfloor \nf{n}{2} \rfloor$ fighters are not enough.
The gap in between is still open, despite serious efforts.

How to {\em contain} a fire has received a lot of attention in graph theory, see, \eg \cite{fm-fpsrd-07,fkmr-fpgmd-07,fhl-fpgc-16}.
In quite a few examples, in each round, a stationary guard can be placed in a vertex not on fire, then the fire spreads to all unguarded adjacent vertices.
This continues until the fire cannot spread any further.
The problem to determine the maximum number of vertices that can be protected is NP-hard, even in trees of degree~3.

Similar in spirit is a geometric firefighting problem in simple polygons~\cite{kll-aagfb-18}, where barriers must be chosen from a set of pairwise disjoint diagonals, to save an area of maximum size. Even for convex polygons, the problem is NP-hard, but a 0.086 approximation algorithm exists.

It is interesting to see what happens when building a barrier along the boundary of an expanding circular fire~\cite{bbfj-bsfcp-08,bw-msbph-09,kll-ffp-15,kllls-ffp-18}. A spiraling curve results that closes on itself, and thus contains the fire, if the speed of building is larger than $2.6144$. Then the number of rounds to completion
can be determined by residue calculus. Below this threshold, the curve keeps winding forever.

The rest of this paper is organized as follows.
Section~\ref{section:modelDefinition} formally introduces the problem as well as terms and definitions required for the analysis.
In Section~\ref{section:lowerBounds} we develop a lower bound of $v > 1.66$.
In Section~\ref{section:upperBounds} we show that $v=\nf{17}{9}=1.\overline{8}$ is sufficient and discuss how this value can even be reduced to $v=1.8772$.

\section{Model}
\label{section:modelDefinition}
In our model, the \fire spreads from the origin and continuously expands over time with speed $1$ according to the $L_1$ metric.
To prevent the fire from immediately spreading into the lower half-plane, we allow an arbitrarily small head-start of barrier of length $\startWall$ into both directions along the $x$-axis.

\begin{figure}[ht]
	\centering
	\includegraphics[width=0.75\textwidth]{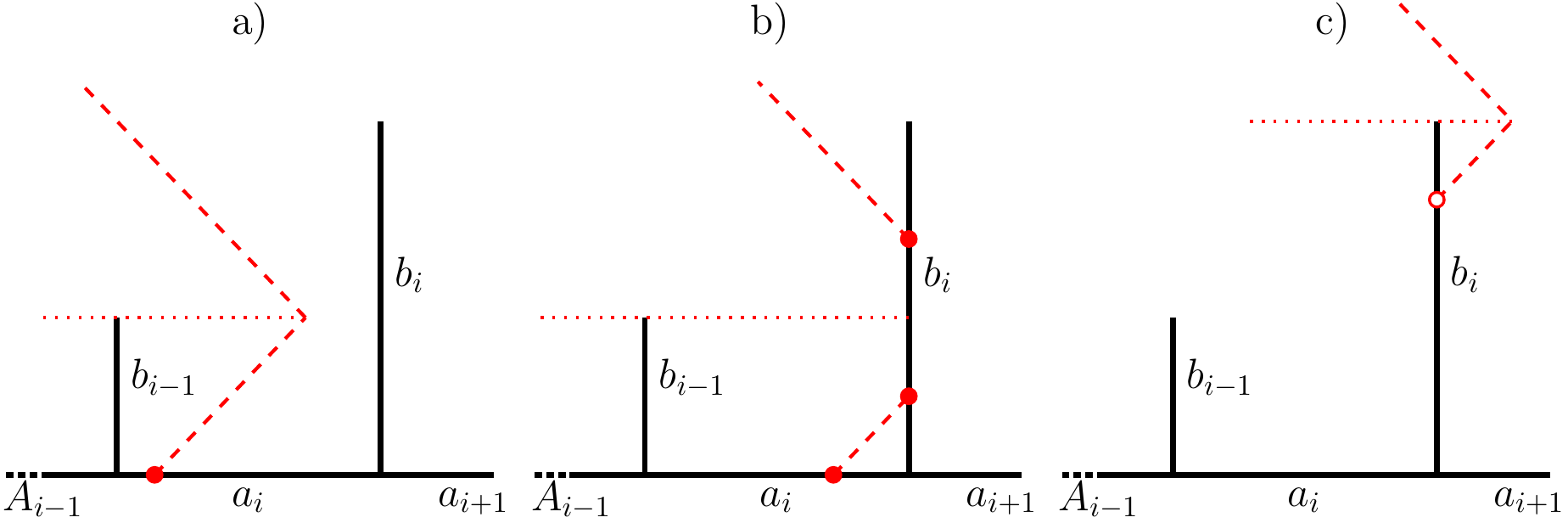}
	\caption{Fire spreading along delaying barriers. The dashed line shows the fire front at different times $t$, solid points represent \burn points, while empty points represent places, where the fire burns along the back of already consumed parts of the \wall $\B_i$. In a) there is one consumption point, so there is a $1$-interval in the right direction. In b) there are three \burn points and in c) there is a $0$-interval in the right direction as there are no consumption points.}
	\label{fig:delayingBarriers}
\end{figure}
Assume that a system of \walls has been built.
The \crews consists of a horizontal \wall containing the fire in the upper half-plane and several vertical delaying barriers attached to it.

To describe a \crews, we denote the $i$-th delaying \wall to the right by $b_i$.
The part of the horizontal \wall between $b_{i-1}$ and $b_i$ is denoted by $a_{i}$.
For simplicity, we also refer to their length by $a_i$ and $b_i$.
For the other direction, we use $c_i$ and $d_i$ respectively.
For convenience, $A_i := \sum_{j=1}^i \A_j$ will denote the total length of horizontal \walls in the right direction up till and including $\A_i$ and $B_i := \sum_{j=1}^i \B_j$ will denote the total length of vertical  \walls in the right direction up till and including $\B_i$.
Equivalently for the left direction we define $C_i$ and $D_i$.

As the fire spreads over the \crews, it represents a geodesic $L_1$ circle, which consumes the \walls when burning along them.
The fire-front is the set of all points in the plane, which shortest non-\wall-crossing path to the fire origin has length $t$.
We consider a point $x$ on a \wall as {\em consumed} at time $t$ if the \fire has reached this point at time $t$. That means there exists a non-\wall-crossing path of length at most $t$ from the fire origin to the point $x$.
Hence, any piece of the barrier is not consumed all at once, but as the fire burns along it.
We call a point on a \wall, which shortest non-\wall-crossing path to the fire has exactly length $t$ a \burn point at time $t$, so the \burn points are a subset of the fire front.
We call the number of \burn points at time $t$ the {\em current \burn} and a time interval with constant $k$ \burn points at all times a {\em $k$-interval}.

The fire front, \burn points and the effect of vertical delaying \walls are illustrated in \autoref{fig:delayingBarriers}.
As one can see, after the fire reaches a delaying barrier for the first time, it may burn along multiple \walls at multiple points.
However, after reaching both ends and passing the top of a \wall there might be no \burn for a while as the delaying \wall has already been burned along from the other side.

%
We define the \emph{total \burn} $\BF{}$ and \emph{\ratio} $\QF{}$ for a time interval $\left[t_1, t_2\right]$ in a \crews:
\begin{equation*}
\begin{array}{r l}
\BF{}(t_1, t_2) 
	& := \text{length of \wall pieces consumed by the fire between } t_1 \text{ and } t_2\\
\QF{}(t_1, t_2) 
	& := \frac{ \BF{}(t_1, t_2)}{t_2 - t_1}\,.
\end{array}
\end{equation*}

For the \burn in a time interval $[0, t]$, we will also write $\BF{}(t)$ and $\QF{}(t)$ for short.
In our setting, if $[t_0, t_1]$ is a $k$-interval, then  $\BF{}(t_1) = \BF{}(t_0) + (t_1-t_0)\cdot k$.

Note that all these definitions can easily be applied to either side of the \crews, denoted by \consumptionRatio[l], \consumptionRatio[r] and $\BF{}^l(t)$, $\BF{}^r(t)$ equivalently.
Obviously, $\consumptionRatio = \consumptionRatio[l] + \consumptionRatio[r]$ and $\BF{}(t) = \BF{}^l(t)+ \BF{}^r(t)$.

It is clear that when building a \crews simultaneously to the \fire spreading, then every piece of barrier should be build before the \fire reaches it. 
For a limited build speed $v$, it is necessary and sufficient to have $\BF{}(t) \leq v \cdot t$ for all times $t$, which means $v \geq \sup_t \QF{}(t)$.
The question then obviously is: What is the minimum speed $v$ for which such a \crews exists?

\section{Prerequisites}\label{sec:prereq}

Observe that a vertical \wall which is shorter than the predecessor in the same direction does not delay the fire. Hence, we can assume that vertical \walls in one direction increase strictly in length, so $\B_i > \B_{i-1}$ and $\D_i > \D_{i-1}$ for all $i>1$. But we can show an even stronger bound on the growth of successive vertical \walls.

\begin{lemma}\label{lemma:prereq}
If there exists a \crews with $\BF{}(t) \leq v \cdot t$ at all times $t$, then there also exists such a \crews in which any vertical \wall $\B_i$ (or $\D_i$) is more than twice as long as the previous \wall $\B_{i-1}$ (or $\D_{i-1}$) in the same direction.
\end{lemma}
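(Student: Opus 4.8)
The natural plan is an exchange argument. Starting from a valid \crews, as long as some vertical \wall $b_i$ has $b_i \le 2b_{i-1}$, pick the least such index, delete $b_i$, and merge the two incident horizontal pieces $a_i$ and $a_{i+1}$ into one; each step removes a vertical \wall, so after finitely many steps one is left with a valid \crews in which every vertical \wall more than doubles its predecessor. Since the two directions evolve independently it suffices to control $\BF{}^r(t)$ under one such deletion.

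What makes the comparison possible is the rigid shape of the burning schedule of an $L_1$ \fire around a chain of vertical segments of strictly increasing heights: the \fire reaches the top of every \wall $b_j$ at time $A_j + b_j$ (straight up at the $y$-axis, then horizontally over all lower \walls); a ground point at distance $x$ is consumed at time $x + 2h(x)$, where $h(x)$ is the height of the tallest \wall left of $x$; and every \wall is consumed along its left face only, by a ``$V$''-shaped front that first touches $b_j$ at the height $b_{j-1}$ of its predecessor, at time $A_j + b_{j-1}$, and then runs up and down, so $b_j$ is entirely gone by time $A_j + \max(2b_{j-1}, b_j)$. Deleting $b_i$ lowers $h(\cdot)$ from $b_i$ to $b_{i-1}$ only on the stretch formerly spanned by $a_{i+1}$, and one reads off from these facts that the only burning times that change are those of $a_{i+1}$ and of $b_{i+1}$ -- each brought forward by $2(b_i - b_{i-1})$, the latter because its $V$-front now starts from height $b_{i-1}$ -- while everything at or beyond the right face of $b_{i+1}$ is untouched; and, since $b_i \le 2b_{i-1}$, the \wall $b_i$ is completely consumed by time $A_i + 2b_{i-1}$.

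Consequently the change in $\BF{}^r(t)$ caused by the deletion is $\alpha(t) + \delta(t) - \beta(t)$, where $\beta(t)$ is the amount of $b_i$ consumed by time $t$ in the old \crews (so $\beta(t) = b_i$ for $t \ge A_i + 2b_{i-1}$) and $\alpha(t), \delta(t) \ge 0$ are the surplus amounts of $a_{i+1}$, respectively of $b_{i+1}$, already consumed in the new \crews. For $t \le A_i + b_{i-1}$ the two \crewss agree on everything touched so far, so this is $0$; the real task is to show $\alpha(t) + \delta(t) \le \beta(t)$ for larger $t$. The growth hypothesis should enter here exactly as the inequality $2(b_i - b_{i-1}) \le b_i$, and one has the timing in one's favour -- in the new \crews the sweeping front runs through $a_{i+1}$ onto $b_{i+1}$ without pause, while in the old \crews $b_i \le 2b_{i-1}$ forces a consumption-free ``$0$-interval'' of length $2(b_i - b_{i-1})$ immediately after $b_i$ -- so that over each interval on which $\alpha, \delta, \beta$ are linear the gains from the two speedups should not outweigh the \wall $b_i$ that was removed (together with the slack that $0$-interval contributes). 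This quantitative balancing is the step I expect to be the real obstacle; the rigid burning schedule and the locality of the surgery are routine once set up carefully.
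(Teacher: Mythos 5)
The burning-schedule analysis is correct, and so is the locality observation (only $a_{i+1}$ and the lower part of $b_{i+1}$ change). But your surgery differs from the paper's in a way that matters: the paper does not just delete $b_i$ and concatenate $a_i$ with $a_{i+1}$; it also \emph{inserts} an extra $2\Delta := 2(b_i - b_{i-1})$ of horizontal barrier between $b_{i-1}$ and the next vertical \wall, so the new horizontal piece has length $a_i + a_{i+1} + 2\Delta$ and everything from $b_{i+1}$ on is pushed away from the origin by $2\Delta$. That shift is the whole point: once the fire has crossed the modified gap, the configuration of the new system at time $t+2\Delta$ is congruent to that of the old system at time $t$ minus the \wall $b_i$ plus $2\Delta$ of extra burned horizontal \wall, so the new consumption at time $t+2\Delta$ equals the old consumption at time $t$ minus $(b_i - 2\Delta) = 2b_{k-1}-b_k > 0$, while the denominator grew. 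The ratio therefore strictly drops, with only the short transition region to be checked directly.

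Without that inserted $2\Delta$, the balancing inequality you defer to the end, $\alpha(t)+\delta(t)\le\beta(t)$, is simply false. Take the favourable local shape $a_{i+1}\ge b_i$ and $b_{i+1}\ge 2b_i$ (exactly the case the paper sketches). Then $\beta$ rises to $b_i$ by time $A_i + 2b_{i-1}$ and stays there; $\alpha$ rises to $2\Delta$ on $[A_i+2b_{i-1},\,A_i+2b_i]$ and stays there until $A_{i+1}+2b_{i-1}$; and $\delta$ rises to $2\Delta$ on $[A_{i+1}+b_{i-1},\,A_{i+1}+b_i]$ and stays there until $A_{i+1}+2b_{i-1}$. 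Hence on the nonempty interval $[A_{i+1}+b_i,\,A_{i+1}+2b_{i-1}]$ one has $\alpha+\delta-\beta = 4\Delta - b_i = 3b_i - 4b_{i-1}$, which is strictly positive whenever $\tfrac{4}{3}b_{i-1} < b_i < 2b_{i-1}$. So your modified system can have strictly more consumption than the original at such a time, and since nothing prevents the original from being tight there, the speed bound can be violated. The step you flag as ``the real obstacle'' is therefore not a routine verification you have postponed but the place where this version of the exchange breaks; the missing idea is precisely the compensating $2\Delta$ of horizontal barrier, which turns a same-time comparison that can go the wrong way into a time-shifted comparison that is always favourable.
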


\begin{proof}
	Assume we are given any \crews $\mathcal{S}$ with $\BF{S}(t) \leq v\cdot t$, not fulfilling both properties $\B_i > 2 \B_{i-1}$ and $\D_i > 2 \D_{i-1}$. Then we can transform it to a new \crews $\mathcal{S}'$ that fulfils both properties $\B_i > 2 \B_{i-1}$ and $\D_i > 2 \D_{i-1}$ while $\BF{S'}(t) < \BF{S}(t) \leq v\cdot t$ for all $t$.
	
	The construction is identical for both directions, so we just consider the right direction.
	Let $\B_k$ ($k>1$) be the first vertical \wall in the right direction with $\B_k < 2 \B_{k-1}$. Then we can remove $\B_k$ and move all following vertical \walls away from the fire by $2\Delta = 2(\B_{k} - \B_{k-1})$.
	So, more precisely the right side of our \crews $\mathcal{S}'$ consists of $\B'_i$ and $\A'_i$ as follows:
	\begin{eqnarray}
	\text{for } i<k \quad & \quad \B'_i = \B_i \quad & \quad \A'_i = \A_i \\
	\text{for } i = k \quad & \quad \B'_i = \B_{i+1} \quad & \quad \A'_i = \A_i + \A_{i+1} + 2\Delta \\
	\text{for } i \geq k \quad & \quad \B'_i = \B_{i+1} \quad & \quad \A'_i = \A_{i+1}
	\end{eqnarray}
	
	If $\B_k$ is the last vertical delaying \wall in the right direction, it can just be removed instead.
	\begin{figure}[htbp]%
		\begin{center}%
			\includegraphics[scale=0.7,keepaspectratio]{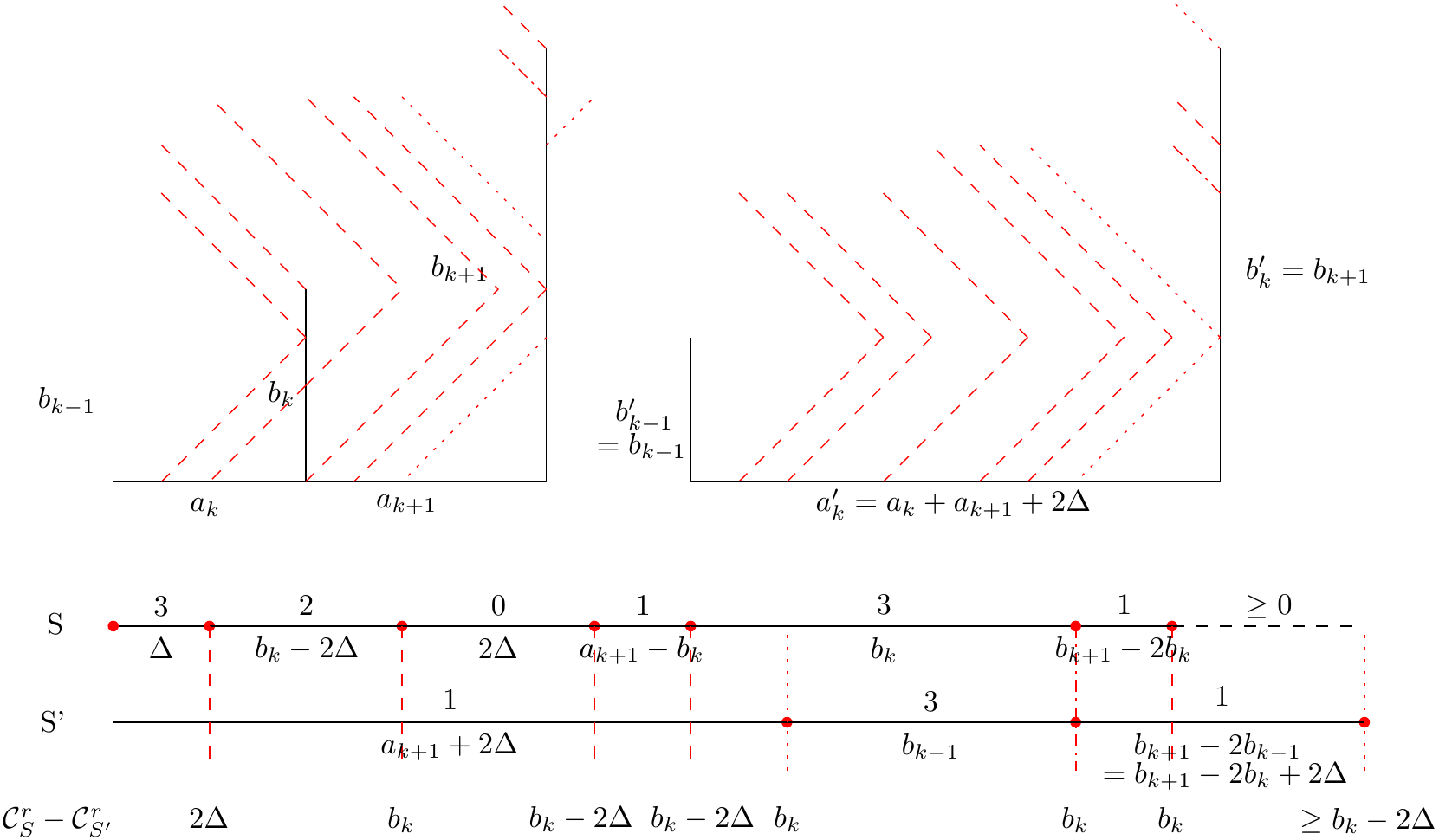}%
			\caption{ The situation in $\mathcal{S}$ and $\mathcal{S}'$ for $a_k \geq b_{k-1}$, $a_{k+1} \geq b_k$ and $b_{k+1} \geq 2b_k$ as well as the resulting intervals and their lengths. The red dashed and dotted lines indicate changes in \burn in either \crews.}
			\label{prereq-fig}
		\end{center}%
	\end{figure}
	To sketch the proof, let us assume that $a_k \geq b_{k-1}$, $a_{k+1} \geq b_k$ and $b_{k+1} \geq 2b_k$ hold.
	The \ratio in the right direction is identical for $\mathcal{S}$ and $\mathcal{S}'$ until time $A_k + b_{k-1}$ when the fire reaches $b_k$ in $\mathcal{S}$.
	\autoref{prereq-fig} shows the next sequences of consumption intervals in $\mathcal{S}$ and $\mathcal{S}’$ until the fire reaches	the top of $b'_k$ in $\mathcal{S}'$. Due to the linearity of \burn within each $k$-interval, only the
	points where intervals change can attain maximal values.
	Direct comparison shows that, $\mathcal{S}'$ has a smaller consumption at all such points in time.
	Once the fire has overcome the gap between $b'_{k-1}$ and $b'_k$ in $\mathcal{S}'$,
	each configuration $K'$ at time $t'$ in $\mathcal{S}'$ corresponds to a configuration $K$ at time $t = t-2 \Delta$ in $\mathcal{S}$. But, due to the presence of vertical barrier $b_k$ and the missing horizontal extension by $2 \Delta$, in $K$ the consumption differs by $b_k - 2 \Delta = 2 b_{k-1} - b_k$, which is positive by assumption.
	Thus, $K’$ has a lower consumption ratio than $K$.
	
	All other cases work similarly: The additional consumption contributed by the added $2\Delta$ of horizontal barrier between $\B'_{k-1}$ and $\B'_k$ is always covered by the removal of the vertical barrier of length $\B_k > 2 \Delta$.
	
	Note, that these arguments require that no part of $\A_k$ is covered by the head-start~\startWall. We can assume so by a similar argument. Let $\B_s$ ($s>1$) be the last vertical \wall in the right direction with $A_s \leq s$, which means that all horizontal barriers $\A_1, \A_2, \ldots, \A_s$ are covered by the head-start. Then combining all barriers $\B_1$ to $\B_s$ into one barrier $\B_s$ at the end of $\startWall$ does not increase $\BF{}^r(t)$ for any $t$. 
	
	This concludes the proof. \qed
\end{proof}

This means that when given an arbitrary \crews, we can assume $\B_i > 2 \B_{i-1}$ and $\D_i > 2 \D_{i-1}$ for all $i>1$. From this we can derive a helpful observation about the order of \burn of vertical and horizontal \walls in a \crews: when the fire reaches the top of a vertical \wall $b_i$ at some time $t$ (compare \autoref{RHsituation-fig}), every \wall $\A_k$ and $\B_k$ with $k \leq i$ has been completely consumed, as for every point on $\A_k$ or $\B_k$ the shortest non-barrier-crossing path has length smaller than $A_i + \B_i = t$. Hence, a $0$-interval in the right direction will begin at such times $t$ and $\BF{}^r(t) = A_i + B_i - s$, where \startWall denotes the length of the head-start not contributing to the consumption.
This observation holds equivalently for both directions.

\section{A lower bound of \ensuremath{v > 1.66}}\label{section:lowerBounds}

Assume there exists a \crews $\mathcal{S}$ consisting of horizontal \walls along the $x$-axis and vertical \walls attached to it. Further assume for $\mathcal{S}$ that $\BF{}(t) \leq v \cdot t$ at all times $t$ for some $v = (1+V)$ with $V \leq \frac{2}{3}$.
For this we will construct a contradiction by identifying a specific time $t_\mathcal{S}$, for which $\BF{}(t_\mathcal{S}) > (1+V) \cdot t_\mathcal{S}$.

By \autoref{lemma:prereq}, we can assume $\B_i > 2 \B_{i-1}$ and $\D_i > 2 \D_{i-1}$ for all $i>1$ in $\mathcal{S}$.
\begin{figure}[hbtp]%
	\begin{center}%
		\includegraphics[scale=0.5,keepaspectratio]{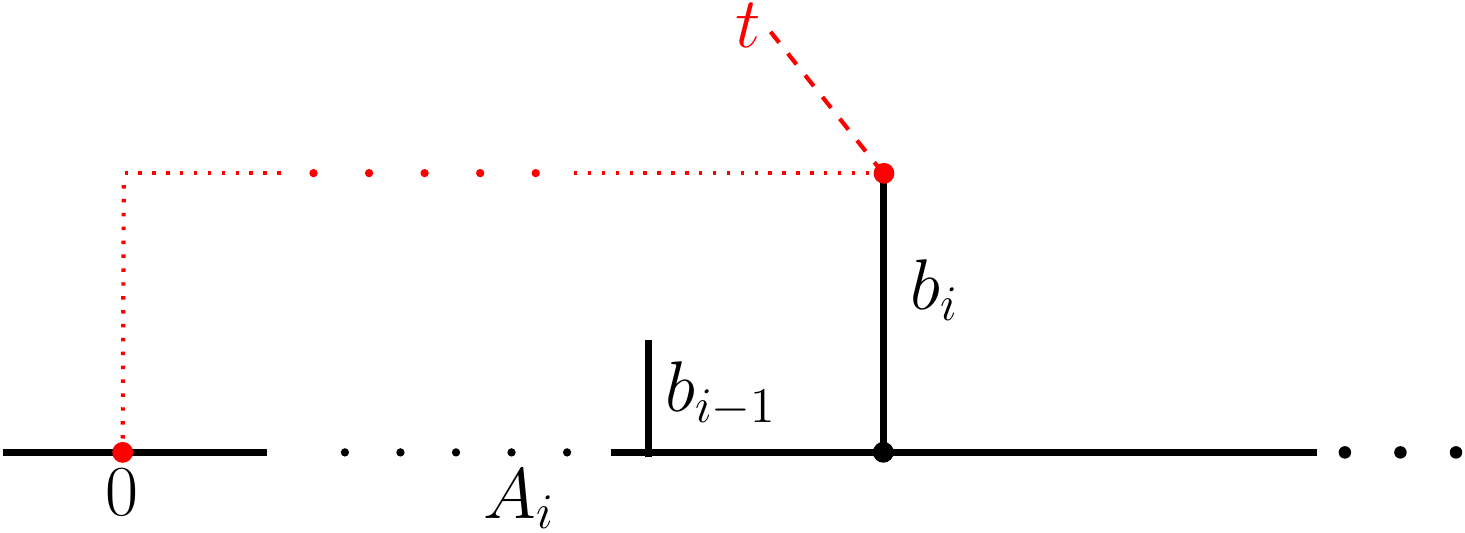}%
		\caption{At some time $t = A_i + \B_i$ the fire will reach the top of a vertical barrier $\B_i$.}
		\label{RHsituation-fig}
	\end{center}%
\end{figure}
As without vertical delaying barriers, the \ratio just goes towards $2$, $\mathcal{S}$ has an unbounded number of vertical barriers in at least one direction. \Wlog assume this is the right direction. 
Consider a moment when the fire reaches the end of some barrier $\B_i$ as illustrated in \autoref{RHsituation-fig}.
As explained in \autoref{sec:prereq}, this happens at time $t = \B_i + A_i$ and Lemma~\ref{lemma:prereq} implies we have $\BF{}^r(t) =  A_i + B_i - \startWall$.

\begin{eqnarray}
\BF{}^r(t) &= &  A_i + B_i - \startWall  = A_i + b_i + B_{i-1} - \startWall \quad\quad  \mid B_{i-1} > 2 \startWall \text{ for $i$ large enough}\nonumber\\
&>&  A_i + \B_i + \startWall > t + \startWall > t \label{leftRequirement}
\end{eqnarray}
Hence for $t$ large enough, $\QF{}^r(t) > 1$ at times $t$, when the fire reaches the top of a vertical barrier.
Therefore, $\mathcal{S}$ has repeated $0$-intervals in the left direction as well, or else $\QF{}^l(t)$ would go towards $1$ and $\QF{}(t) > 2$ at such times $t$.
\begin{figure}[hbtp]%
	\begin{center}%
		\includegraphics[width=\textwidth,keepaspectratio]{}%
		\caption{All three possible situations for the left side to be in at time $t$. Note that in case 1) and 2) the fire might have reached $\D_{j+1}$, which does not affect our considerations.}
		\label{AllCases-fig}
	\end{center}%
\end{figure}
We now consider the situation in the left direction at time $t = \B_i + A_i$. Let $\D_j$ denote the last vertical barrier, whose upper end was reached by the fire, so $t = \D_j + C_j + \delta$ with $0 \leq \delta < \C_{j+1} + \D_{j+1} - \D_{j}$. 
\Wlog we assume that $\B_{i+1} + A_{i+1} \geq \D_{j+1} + C_{j+1}$. Otherwise, there must be multiple vertical barriers in the right direction whose upper ends are reached by the fire after it reaches the upper end of $\D_j$ and before it reaches the upper end of $\D_{j+1}$. In that case, we can assume that $\B_i$ is the last among those, such that $\B_{i+1} + A_{i+1} \geq \D_{j+1} + C_{j+1}$ holds.

We split our consideration in three cases, which are all illustrated in \autoref{AllCases-fig}:
\begin{enumerate}
	\item $0 \leq \delta < \D_j$ \label{LH1} 
	\item $\D_j \leq \delta < \D_j + \C_{j+1}$ \label{LH2} 
	\item $\D_j + \C_{j+1} \leq \delta < \C_{j+1} + \D_{j+1} - \D_{j}$ \label{LH3}
\end{enumerate}

In the first case, the fire has not reached the horizontal barrier $\C_{j+1}$ yet after passing over $\D_j$; in the second case, it has reached $\C_{j+1}$, but not its end; in the third case the fire has completely consumed $\C_{j+1}$.

In Case~\ref{LH3}, $\delta = \D_j + \C_{j+1} + \epsilon$ and then $\BF{}^l(t) \geq C_{j+1} + D_{j} + 2\D_j + \epsilon - \startWall > (\D_j + C_j) + (\D_j + \C_{j+1}) + \epsilon = t$, which together with Inequality~(\ref{leftRequirement}) already gives $\BF{} (t) > 2 t > (1+V) \cdot t$ which is a contradiction.

For both remaining cases, we will derive a lower bound for $\D_j$. We will then consider the moment $t_1 =  2 \D_j + C_{j+1}$, when the fire reaches the end of the horizontal barrier $\C_{j+1}$.
Using the lower bound on $\D_j$, we will prove $\BF{}(t_1) > (1+V) \cdot t_1$.

\subsection{Case 1: \ensuremath{0 \leq \delta < \D_j}} 
In Case \ref{LH1},  
 $\BF{}^l(t) > C_{j} + D_{j} - \startWall = C_{j} + \D_{j} + D_{j-1} - \startWall > C_{j} + \D_{j}$, since $D_{j-1} > s$ for $j$ large enough.
Now at time $t$, it must hold:
\begin{eqnarray}
\BF{}(t) = \BF{}^r(t) + \BF{}^l(t) & < &  (1+V) \cdot t \quad\quad\quad\quad\quad\quad\,\,\,\, \mid \text{Inequality~(\ref{leftRequirement})}\nonumber \\
\Rightarrow  \,\quad\quad\quad\quad\quad\quad C_{j} + \D_{j} & < & V (\D_j + C_{j} + \delta) \nonumber \\
\Rightarrow  \quad\quad\quad\quad\quad (V-1) C_{j} & >& (1-V) \D_j - V \delta \quad\quad\quad\quad \mid \left(V < 1 \right)   \nonumber \\
\Leftrightarrow \,\,\quad\quad\quad\quad \quad\quad\quad\quad  C_j & < & \nicefrac{V}{(1-V)} \cdot \delta - \D_j \label{lastIn1}
\end{eqnarray}
$V\leq\frac{2}{3}$ implies $\nicefrac{V}{(1-V)} \leq 2$ by direct calculation, which gives bounds for $C_j, \D_j$:
\begin{eqnarray}\label{central}
C_j & < & 2 \delta - \D_j < \D_j \quad \mid \delta < \D_j \text{ in Case~\ref{LH1}}\nonumber\\
\Rightarrow \,\,\,\quad 2 \D_j &>& C_j + \delta \nonumber \\
\Leftrightarrow \quad\quad \D_j &>& \nicefrac{1}{2} (C_j + \delta) \label{LH1-D-Bound}
\end{eqnarray}

\subsection{Case 2: \ensuremath{\D_j \leq \delta < \D_j + \C_{j+1}}} 
In Case~\ref{LH2} a part of $\C_{j+1}$ of length $(\delta-\D_j)$ has already been consumed, so $\BF{}^l(t) \geq D_j + C_j + (\delta-\D_j) - \startWall > \D_j + C_j + (\delta-\D_j) = C_j + \delta$, as $D_{j-1} > s$ for $j$ large enough.
Now at time $t$ it must hold
\begin{eqnarray}
\BF{}(t) = \BF{}^r(t) + \BF{}^l(t) & < &  (1+V) \cdot t \quad\quad\quad\quad\quad\quad\,\,\,\, \mid \text{Inequality~(\ref{leftRequirement})}\nonumber \\
\Rightarrow \quad\quad\quad\quad\quad\:\:\:\:\: C_j + \delta &  <  & V( \D_j + C_j + \delta) \nonumber\\
\Rightarrow \quad\quad (1-V) (C_j + \delta) & < & V \D_j \nonumber\\
\Rightarrow \quad\quad\quad\quad\quad\quad\quad\quad \D_j & > & \nicefrac{(1-V)}{V} ( C_j + \delta )
\end{eqnarray}
$V\leq\frac{2}{3}$ implies $\nicefrac{(1-V)}{V} \geq \frac12$ by direct calculation, which gives the bound:
\begin{equation}
\D_j  >   \nicefrac{1}{2}(C_j + \delta) \label{LH2-D-Bound}
\end{equation}
This is the same bound as found for Case~\ref{LH1} in Inequality~(\ref{LH1-D-Bound}).

\subsection{Deriving the contradiction \ensuremath{\BF{}(t_1) > (1+V) \cdot t_1}}
\begin{figure}[hbtp]%
	\begin{center}%
		\includegraphics[scale=0.5,keepaspectratio]{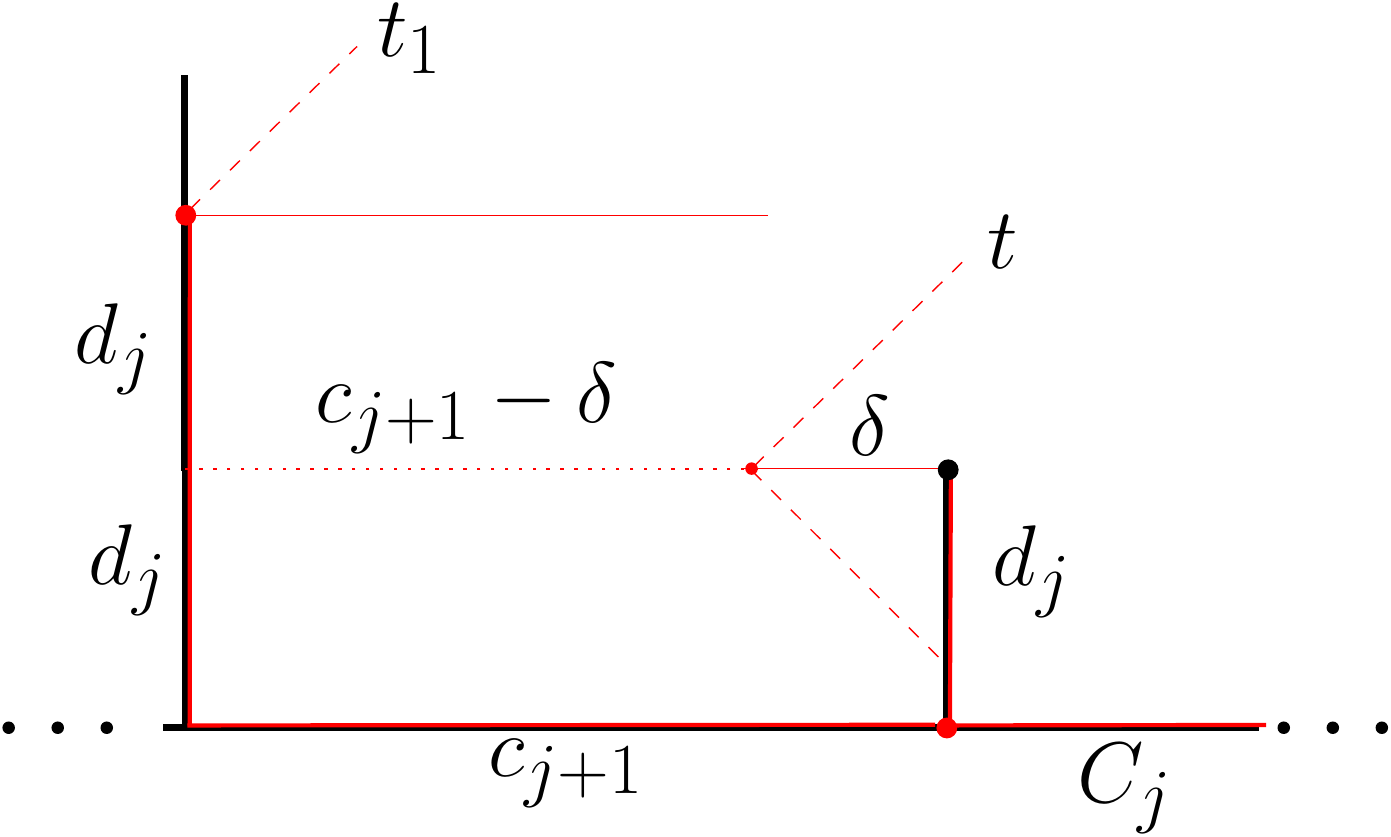}%
		\caption{After $\D_{j} + \C_{j+1}  - \delta$ additional time after $t$, the fire has reached the end of $\C_{j+1}$ and has also consumed a piece of length $2\D_j$ of the next vertical barrier.}
		\label{LH1situation-fig}
	\end{center}%
\end{figure}
Now we consider time $t_1 = C_{j+1} + 2 \D_j > t$, when the fire reaches the end of the horizontal \wall $\C_{j+1}$.
As for any time, at time $t_1$, it must hold 
\begin{eqnarray}
\BF{}(t_1) &=& \BF{}^r(t_1) + \BF{}^l(t_1) \leq (1+V) \cdot t_1 \nonumber \\
\Leftrightarrow \quad\quad\:\:\: \BF{}^l(t_1) &\leq & (1+V) \cdot t_1 - \BF{}^r(t_1) \nonumber\\
&=& (1+V) \cdot t + (1+V) (t_1-t) - (\BF{}^r(t) + \BF{}^r(t, t_1)) \nonumber\\ 
&\leq& V t + (1+V) (t_1-t) + t - \BF{}^r(t) \quad \mid \text{Ineq.~(\ref{leftRequirement})}\label{TightRequirement} \\
\Rightarrow \quad \BF{}^l(t_1) + s  &<& V t + (1+V) (t_1-t)\label{Case1leftRequirement}
\end{eqnarray}
By construction, $t_1 = C_{j+1} + 2 \D_j$.
As $t = \D_j + C_{j} + \delta$, this means $t_1 = t + (\D_j + \C_{j+1} - \delta)$.
Due to Lemma~\ref{lemma:prereq}, we know that the fire has not reached the end of $\D_{j+1}$ yet, hence $\BF{}^l(t_1) \geq 3 \D_j + C_{j+1} -\startWall$. 
Hence, we arrive at the following inequalities:
\begin{eqnarray}
3 \D_j + C_{j+1} & < &   V(\D_j + C_{j} + \delta) +  (1+V) (\D_j + \C_{j+1}-\delta) \nonumber \\
\Leftrightarrow \quad -V (\C_{j+1}-\delta) & < & (V-1)\delta + (V-1) C_{j} + (2V-2)\D_j \quad\quad\quad \mid \left(1>V\right) \nonumber\\
\Leftrightarrow \quad\quad\quad\,\, \C_{j+1}-\delta & > & \frac{1-V}{V} \delta + \frac{1-V}{V} C_{j} + 2\frac{1-V}{V} \D_j \,. \label{last2}
\end{eqnarray}
$V\leq\frac{2}{3}$ implies $\nicefrac{(1-V)}{V} \geq \frac12$ by direct calculation, which gives the bound:
\begin{eqnarray}
\C_{j+1}-\delta & > &  \frac{1}{2} \delta + \frac{1}{2} C_j + \D_j\, \nonumber\\
\Leftrightarrow \qquad \D_j + \C_{j+1} - \delta &>& \frac{1}{2} \delta + \frac{1}{2} C_j + 2 \D_j\label{finalconcl} 
\end{eqnarray}
Now in both cases we got $\D_j > \nicefrac{1}{2} (C_j + \delta)$ (Inequalities~(\ref{LH1-D-Bound}) and (\ref{LH2-D-Bound})), so we can apply that and conclude: 
\begin{eqnarray}
t_1-t = \D_j + \C_{j+1} - \delta & > & C_j + \D_j + \delta = t = A_i + \B_i \label{ineq:final}
\end{eqnarray}  

So we know, that in both cases $t_1 - t > \B_i + A_i$.
Now consider the situation in the right direction again (compare \autoref{RHsituation-fig}). At $t + \B_i$ the fire reaches the horizontal \wall $\A_{i
+1}$ behind $\B_i$. Additionally, by assumption $\B_{i+1} + A_{i+1} \geq \D_{j+1} + C_{j+1}$, the fire has not reached the top of the next barrier $\B_{i+1}$ at $t_1$. This means, that between $t + \B_i$ and $t_1$, there is always at least consumption $1$ in the right direction, which means the fire has consumed \walls of length at least $A_i$, hence $\BF{}^r(t,t_1) \geq A_i$. 

As our whole consideration is based on inequalities, we will consider an edge case with a contradiction that can be extended to our given \crews $\mathcal{S}$.
More precisely, assume, that Inequality~(\ref{TightRequirement}) is tight for some $t_1^*$, so:
\begin{eqnarray*}
\BF{}^l(t_1^*) &=& V t + (1+V) (t_1^*-t) + t - \BF{}^r(t)  \\
\Leftrightarrow \quad \BF{}^r(t) + \BF{}^l(t_1^*) &=& (1+V)t_1^*
\end{eqnarray*}
By our arguments above, $\BF{}^r(t,t_1^*) \geq A_i$ and hence $\BF{}(t_1^*) = \BF{}^r(t,t_1^*) + \BF{}^r(t) + \BF{}^l(t_1^*) \geq (1+V)t_1^* + A_i > (1+V)t_1^*$, which is a contradiction for this edge case.
 
Now in our given \crews $\mathcal{S}$ it holds $t_1 = t_1^* + x$ for some $x>0$. As everything except $\C_{j+1}$ is fixed at $t$, this additional time results in additional consumption of at least horizontal \walls of length $x$ in both directions in comparison to the edge case. Hence we can extend the contradiction:
\begin{eqnarray*}
\BF{}(t_1) &=& \BF{}^l(t_1) + \BF{}^r(t) + \BF{}^r(t,t_1) \\
&=& \BF{}^l(t_1^*) + \BF{}^r(t) + \BF{}^r(t,t_1^*) + 2x \\
&=& (1+V)t_1^* + 2x + A_i > (1+V) (t_1^* + x) = (1+V) t_1.
\end{eqnarray*}

\begin{theorem}
	The fire can not be contained in the upper half-plane with speed $v \leq 1.66$ by a \crews consisting of a horizontal \wall along the $x$-axis and vertical \walls attached to it.
\end{theorem}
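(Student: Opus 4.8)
The plan is to assemble the chain of estimates built up in this section. Everything preceding the statement already derives a contradiction for any admissible speed $v=1+V$ with $V\le\tfrac23$, and since $1.66<\tfrac53=1+\tfrac23$, the hypothesis $v\le 1.66$ falls safely inside that range. So the first thing I would do is set $V:=v-1\le 0.66<\tfrac23$, record that every step above that relies on $V\le\tfrac23$ --- in particular $\tfrac{V}{1-V}\le 2$, equivalently $\tfrac{1-V}{V}\ge\tfrac12$ --- is now available, and note that Lemma~\ref{lemma:prereq} lets me assume $\B_i>2\B_{i-1}$ and $\D_i>2\D_{i-1}$ for all $i>1$.

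Next I would reproduce the structural dichotomy. Since a horizontal wall carrying no delaying barriers has consumption ratio tending to $2$, the system $\mathcal S$ must have infinitely many vertical barriers on at least one side; after possibly swapping the two sides, say the right one. For every sufficiently large $i$, at the instant $t=A_i+\B_i$ when the fire tops $\B_i$ one has $\BF{}^r(t)=A_i+B_i-\startWall$, which Inequality~(\ref{leftRequirement}) upgrades to $\BF{}^r(t)>t+\startWall$. Feeding this into the budget forces the left side, too, to have infinitely many $0$-intervals, hence infinitely many vertical barriers, since otherwise $\QF{}^l(t)$ tends to $1$ and $\QF{}(t)>2$ would hold at these top-reaching instants.

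Then I would carry out the case analysis exactly as above. Fix a large $i$, let $\D_j$ be the last left barrier whose top has been reached by time $t=A_i+\B_i$, write $t=\D_j+C_j+\delta$, and --- replacing $i$ by the last right index for which $\B_{i+1}+A_{i+1}\ge\D_{j+1}+C_{j+1}$ --- arrange the ordering condition. Splitting on the location of $\delta$: Case~\ref{LH3} gives $\BF{}(t)>2t>(1+V)t$ outright, while in Cases~\ref{LH1} and~\ref{LH2} the budget inequality at time $t$ together with $\tfrac{1-V}{V}\ge\tfrac12$ yields in both the single bound $\D_j>\tfrac12(C_j+\delta)$ (Inequalities~(\ref{LH1-D-Bound}),~(\ref{LH2-D-Bound})). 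Finally I would evaluate the budget at $t_1=C_{j+1}+2\D_j$, the moment the fire clears $\C_{j+1}$: Lemma~\ref{lemma:prereq} gives that $\D_{j+1}$ is not yet topped, so $\BF{}^l(t_1)\ge 3\D_j+C_{j+1}-\startWall$, which with Inequality~(\ref{TightRequirement}) and $\tfrac{1-V}{V}\ge\tfrac12$ forces $t_1-t>\B_i+A_i$. Since on the right there is consumption at least $1$ throughout $[t+\B_i,t_1]$, one gets $\BF{}^r(t,t_1)\ge A_i$, and the edge-case manoeuvre --- make Inequality~(\ref{TightRequirement}) tight at some $t_1^{*}$, derive the strict contradiction there, then re-inflate by $x=t_1-t_1^{*}>0$, which adds at least $x$ to each side's horizontal consumption --- produces $\BF{}(t_1)>(1+V)t_1$, the contradiction.

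The part I expect to require the most care is not any single estimate but the bookkeeping of the various ``for $i$ (resp.\ $j$) large enough'' provisos scattered through the argument --- $B_{i-1}>2\startWall$, $D_{j-1}>\startWall$, the head-start $\startWall$ not overlapping any $\A_k$, and the pairing of the right index $i$ with the left index $j$ --- all of which must be in force simultaneously at the time $t$ we single out. Each is an eventual condition and there are only finitely many of them, so a single large enough choice of index works, but a fully rigorous write-up has to make that selection explicit. It is precisely the borderline behaviour at $V=\tfrac23$, where both these eventual conditions and the inequality $\tfrac{V}{1-V}\le 2$ become tight, that motivates quoting the safe constant $1.66$ in the statement rather than $\tfrac53$ itself.
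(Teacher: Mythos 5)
Your proposal is correct and follows the paper's own argument essentially step for step: the reduction via Lemma~\ref{lemma:prereq}, the observation that both sides must have infinitely many vertical barriers, the case split on $\delta$, the unified bound $\D_j>\tfrac12(C_j+\delta)$, the passage to $t_1=C_{j+1}+2\D_j$ yielding $t_1-t>A_i+\B_i$, the lower bound $\BF{}^r(t,t_1)\ge A_i$, and the tight-case-then-inflate contradiction. Your closing remark about bundling the several ``for $i$, $j$ large enough'' provisos into one explicit choice of index is a fair point about rigour but does not change the argument.
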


\section{Upper bounds}
\label{section:upperBounds}
We prove the upper bound by defining a \crews with bounded \ratio.
Before we present the construction, we give some intuition.
We choose the following conditions:
\begin{equation}
\label{equation:conditionsOfRecursion}
\begin{array}{r c c c l}
		& \A_{i+1} \geq \B_{i} 	& \text{and} 	& \B_{i+1} \geq 2 \B_{i}  	&	\forall i \geq 1, \\
\text{similarly } 
		& \C_{i+1} \geq \D_{i} 	& \text{and}	& \D_{i+1} \geq 2 \D_{i}  &	\forall i \geq 1.
\end{array}
\end{equation}
This forces the $0$-intervals generated by $b_i$ to be of length of $b_i$.
For a single direction this results in a repeating sequence of $k$-intervals of specific lengths and $k$ as shown in \autoref{fig:interval-cycle}.
\begin{figure}[ht]
	\centering
	\includegraphics[width=\textwidth]{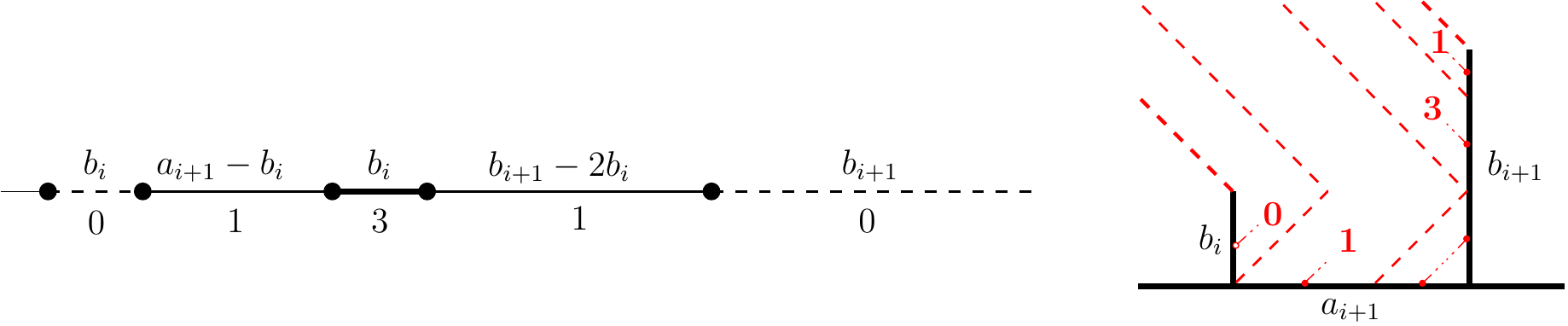}
	\caption{
	A sequence of $k$-intervals to the right of $(0,0)$.
	The length is given above each interval and the current \burn below.
	}
	\label{fig:interval-cycle}
\end{figure}
The idea is to construct the \crews in such a way that the $0$-intervals always appear in an alternating fashion, so the local maxima in the \ratio of one direction can be countered by the $0$-intervals of the other direction.

To show that this idea can be realized, we consider the periodic interlacing of time intervals as illustrated in \autoref{fig:recursiveInterlacing}.
There, the ends of the $0$-intervals in one direction coincide with the ends of the \mbox{$3$-intervals} in the other direction, that is, at $t_3$ and $t_6$.
\begin{figure}[ht]
	\centering
	\includegraphics[width=\textwidth]{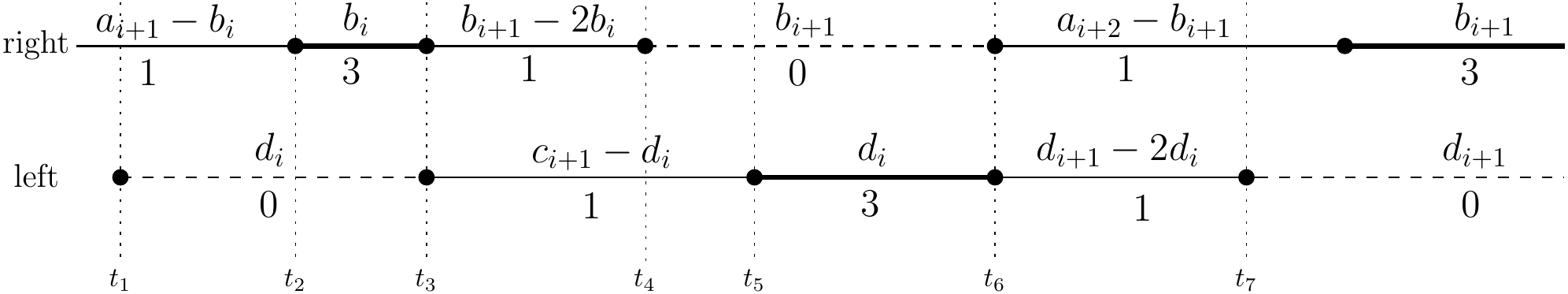}
	\caption{The periodic interlacing of time intervals.}
	\label{fig:recursiveInterlacing}
\end{figure}

The current \burn is always greater than $1$, since the $0$-intervals do not overlap.
Also, the combined \ratio $\QF{}(t)$ must be smaller than $2$ at all times.
This also implies that $t_3$ is no local maximum and the \ratio grows towards 2 between $t_3$ and $t_4$.
Hence, by setting $d_i > 2b_i$ we make $t_1$, $t_4$, $t_7$ the local maxima and $t_2$, $t_5$ the local minima of $\QF{}(t)$.

\newcommand{\constShift}{\ensuremath{\alpha}\xspace}
\newcommand{\constGrowth}{\ensuremath{\beta}\xspace}
Let us now consider the \ratio $\QF{}(t_1, t_4)$ of the cycle from $t_1$ to $t_4$.
There are two $1$-intervals involved in this cycle in the right direction.
The first one, where the fire burns along $a_{i+1}$, is of length $a_{i+1}-b_i$ and lies partially in this cycle.
The second one, where the fire crawls up along $b_{i+1}$, is of length $b_{i+1}-2b_i$ and lies completely in this cycle.
As the beginning of this cycle is given by the start of the $0$-interval on one side and the end is given by the end of the second $1$-interval on the other side, we know that the length of this cycle is $d_i + (b_{i+1} - 2 b_i)$.
The total \burn in this cycle is $1 \cdot d_i + 2 \cdot b_i + 2(b_{i+1} - 2 b_i)$.
Now we define $d_i = \constGrowth\cdot b_{i}$, $b_{i+1} = \constGrowth \cdot d_i$, and $d_i = \constShift + 2b_i$ for some $\constShift, \constGrowth \in \mathbb{R}_{>0}$.
Note that this choice satisfies all our conditions, including $d_i > 2 b_i$, and that $\constShift = (\constGrowth -2)b_i$ and $b_{i+1} = \constGrowth ^2b_i$.
Then the \ratio $\QF{}(t_1, t_4)$ of the cycle is given by
\begin{equation*}
	\frac{\BF{}(t_1, t_4)}{t_4 - t_1}
	= \frac{(\constShift + 2b_i) + 2b_i + 2(b_{i+1}-2b_i)}{(\constShift + 2 b_i) +b_{i+1}-2b_i} 
	= \frac{\constShift + 2b_{i+1}}{\constShift + b_{i+1}}
	= \frac{(\constGrowth-2) + 2\constGrowth^2}{(\constGrowth -2)+ \constGrowth^2}
\end{equation*}
and attains a minimal value of \nf{17}{9} for $\constGrowth = 4$.
Note that by design, $\QF{}(t_1, t_2)$ and $\QF{}(t_1, t_3)$ stay below \nf{17}{9}, as well.
Moreover, if the \ratio has a maximum of \nf{17}{9} at the beginning of the cycle at $t_1$, this will also be the case at the end at $t_4$ as
\begin{equation*}
\QF{}(t_4) 
	= \frac{\BF{}(t_1)+\BF{}(t_1, t_4)}{t_4}
	= \frac{t_1}{t_4} \cdot \frac{\BF{}(t_1)}{t_1} + \frac{t_4 - t_1}{t_4} \cdot \frac{\BF{}(t_1, t_4)}{t_4 - t_1}
	\leq \frac{17}{9}.
\end{equation*}
Since the cycles change their roles at $t_4$ such that the $0$-interval occurs on the right side of $(0,0)$, the same argument can be used to bound the local \ratio in the following interval and for all subsequent cycles, recursively.
Note that by looking at the time interval from $t_3$ to $t_6$, we can derive a closed form for $c_{i+1}$.
Similarly we proceed for $a_{i+1}$.

To prove the final theorem, it remains to find initial values to get the interlacing started, while maintaining $\QF{}(t) \leq \nf{17}{9}$.
Suitable values are
\begin{equation*}
\begin{array}{r l  r l  r l r l r l}
\A_1 &:= \startWall 
	& \hspace{.5cm}\B_1  	&:= 17 \startWall
	& \hspace{.5cm} \A_2		&:= 34 \startWall
	& \hspace{.5cm}\A_{i+1} &:= 7.5 \B_{i}
	& \hspace{.5cm}\B_{i+1} &:= 4 \D_{i}\\
\C_1 & := \startWall
	& \D_1		&:= 34 \startWall 
	& \C_2 		&:= 238 \startWall
	& \C_{i+1} 	&:= 7.5 \D_{i}
	& \D_{i+1} 	&:= 4 \B_{i+1},\\
\end{array}
\end{equation*}
which results in the starting intervals given in \autoref{fig:intervalStart}.
The local maxima at $t_1$ and $t_4$ then have \ratio exactly $\nf{17}{9}$. The interval between $t_2$ and $t_3$ is set up equivalent to the one between $t_3$ and $t_6$ in \autoref{fig:recursiveInterlacing}, which means the interlacing construction can be applied to all intervals beyond.
Note that all barriers scale with \mbox{$s$.}
An example of this construction for $\startWall = 1$ is given in \autoref{fig:exampleBarrierSystem}.
\begin{figure}[ht]
	\centering
	\includegraphics[width=.8\textwidth]{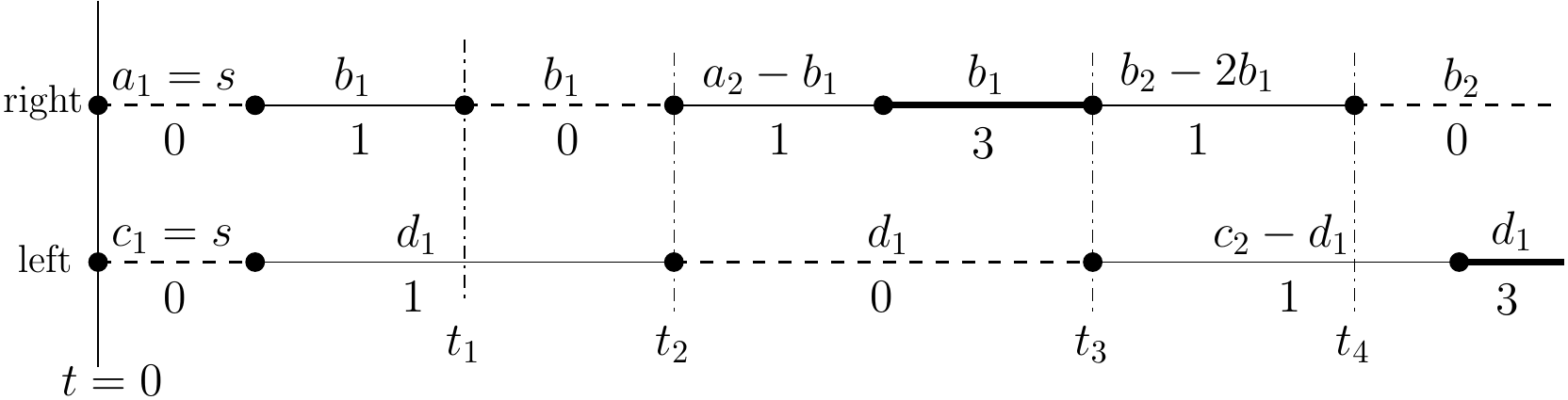}
	\caption{Illustration of time intervals at the start. Due to their growth, the sizes of the intervals are not true to scale.}
	\label{fig:intervalStart}
\end{figure}
\begin{figure}[ht]
	\centering
	\includegraphics[width=.8\textwidth]{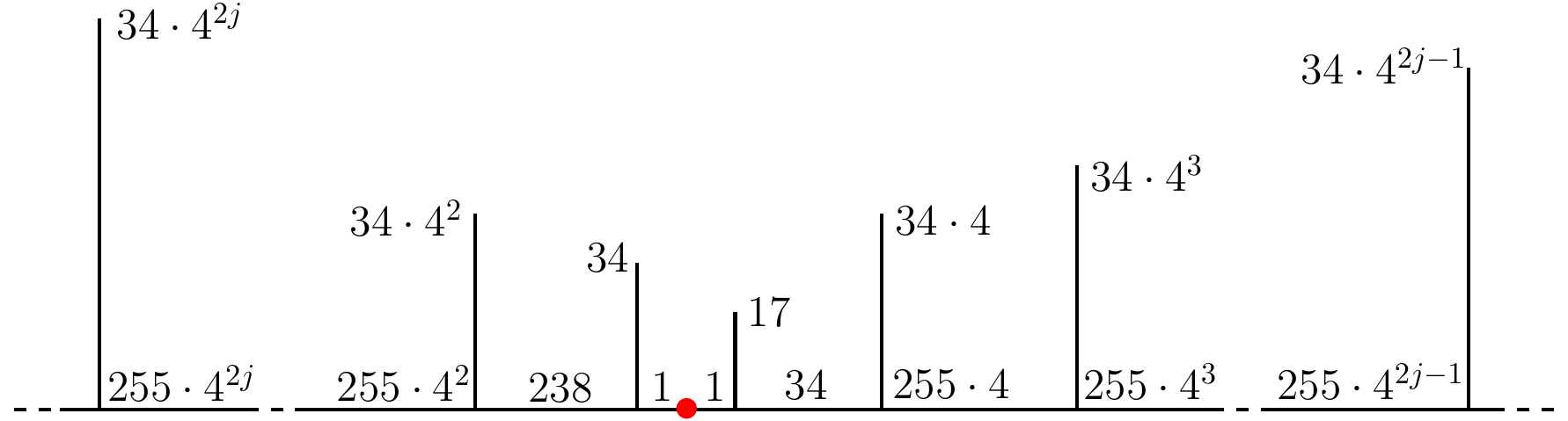}
	\caption{Example for the final \crews for $\startWall = 1$, also not true to scale.}
	\label{fig:exampleBarrierSystem}
\end{figure}
\begin{theorem}
The fire can be contained in the upper half-plane with speed $v = \frac{17}{9} =  1.\overline{8}$
\end{theorem}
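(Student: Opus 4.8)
The plan is to exhibit the explicit \crews defined in the statement (the initial values $\A_1,\B_1,\A_2,\C_1,\D_1,\C_2$ together with the recursions $\A_{i+1}=7.5\,\B_i$, $\B_{i+1}=4\,\D_i$, $\C_{i+1}=7.5\,\D_i$, $\D_{i+1}=4\,\B_{i+1}$) and verify that it satisfies $\QF{}(t)\le\nf{17}{9}$ for all $t$, since by the remarks in \autoref{section:modelDefinition} this is exactly the condition needed for build speed $v=\nf{17}{9}$ to suffice. First I would record that these values satisfy the structural conditions~(\ref{equation:conditionsOfRecursion}), in particular $\B_{i+1}\ge 2\B_i$ and $\D_{i+1}\ge 2\D_i$ (indeed with factor $8$), so that \autoref{lemma:prereq}'s regime applies and — as explained at the end of \autoref{sec:prereq} — each $0$-interval generated by $b_i$ has length exactly $b_i$, and the sequence of $k$-intervals on each side is precisely the one depicted in \autoref{fig:interval-cycle}.

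Next I would set up the interlacing bookkeeping. The key quantitative lemma is the cycle computation already carried out in the excerpt: writing $d_i=\constGrowth\,b_i$, $b_{i+1}=\constGrowth\,d_i$, $\constShift=(\constGrowth-2)b_i$ with $\constGrowth=4$, the local ratio over a full cycle $[t_1,t_4]$ equals $\frac{(\constGrowth-2)+2\constGrowth^2}{(\constGrowth-2)+\constGrowth^2}=\nf{17}{9}$, and one checks $\QF{}(t_1,t_2)$ and $\QF{}(t_1,t_3)$ also stay $\le\nf{17}{9}$ because the $0$-interval on the opposite side overlaps the first local maximum. Then the convexity identity
\begin{equation*}
\QF{}(t_4)=\frac{t_1}{t_4}\,\QF{}(t_1)+\frac{t_4-t_1}{t_4}\,\QF{}(t_1,t_4)
\end{equation*}
shows that if $\QF{}(t_1)\le\nf{17}{9}$ then $\QF{}(t_4)\le\nf{17}{9}$, and since the two directions swap roles at $t_4$ the same estimate propagates to every subsequent cycle by induction on $i$. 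I would also note that within a cycle the ratio is piecewise linear between interval endpoints (linearity of \burn inside each $k$-interval), so it suffices to control the finitely many endpoints $t_1,\dots,t_6$ per cycle rather than a continuum of times.

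The remaining work — and the part I expect to be the main obstacle — is the base case: checking by direct computation that the handcrafted initial barriers get the interlacing ``started'' with $\QF{}(t)\le\nf{17}{9}$ on the whole initial stretch (up to the point where the generic cycle argument takes over at $t_2$), and in particular that the first two local maxima, at $t_1$ and $t_4$ in \autoref{fig:intervalStart}, attain the value $\nf{17}{9}$ exactly rather than exceeding it. This is just arithmetic with the listed multiples of \startWall, but it is fiddly because the very first intervals are irregular (the head-start \startWall is partly non-consuming, the first few $0$-intervals do not yet have the clean length $b_i$, and the $\A_2,\C_2$ values are chosen by hand rather than by the recursion), so one must track the running total $\BF{}(t)$ across each interval boundary and confirm the ratio never overshoots; I would present this as a short table of $(t,\BF{}(t))$ pairs at the interval endpoints. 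Once the base case is pinned down and the inductive step is in place, $\QF{}(t)\le\nf{17}{9}$ holds for all $t$, which gives the theorem; the improvement to $v=1.8772$ would then come from relaxing the equalities $d_i=\constShift+2b_i$ and re-optimising $\constGrowth$ together with the interlacing offset, but that is outside the scope of the stated theorem.
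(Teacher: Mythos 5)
Your proposal reproduces the paper's argument essentially step for step: impose the structural conditions~(\ref{equation:conditionsOfRecursion}), compute the cycle ratio $\frac{(\constGrowth-2)+2\constGrowth^2}{(\constGrowth-2)+\constGrowth^2}$ and minimize at $\constGrowth=4$ to get $\nf{17}{9}$, propagate the bound across cycles via the convexity identity for $\QF{}(t_4)$, reduce to checking finitely many interval endpoints by monotonicity of the ratio within each $k$-interval, and verify the hand-crafted initial values separately. This is exactly the paper's plan, and your flagging of the base-case arithmetic as the fiddly remaining step matches what the paper itself leaves to a figure and a brief assertion.
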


\newcommand{\constOffset}{\ensuremath{\delta}\xspace}
\subsection{Improving the upper bound}
It is possible to reduce the upper bound of $v=1.\overline{8}$ slightly.
As shown in \autoref{fig:recursiveInterlacing}, the end of the $3$-interval in one direction coincides with the end of the $0$-interval in the other direction, which makes $t_4$ the only local maximum of the interval $[t_1, t_4]$.
We introduce a regular shift by a factor of \constOffset , see \autoref{fig:recursiveInterlacingImproved}.
This allows the $3$-interval in one direction to lie completely inside the $0$-interval of the other direction, as shown in \autoref{fig:recursiveInterlacingImproved}.
Then, there are two local maxima in the equivalent interval $[t_1, t_5]$, namely at $t_3$ and $t_5$.
We force both maxima to attain the same value to minimize both at the same time.

\begin{figure}[ht]
	\centering
	\includegraphics[width=\textwidth]{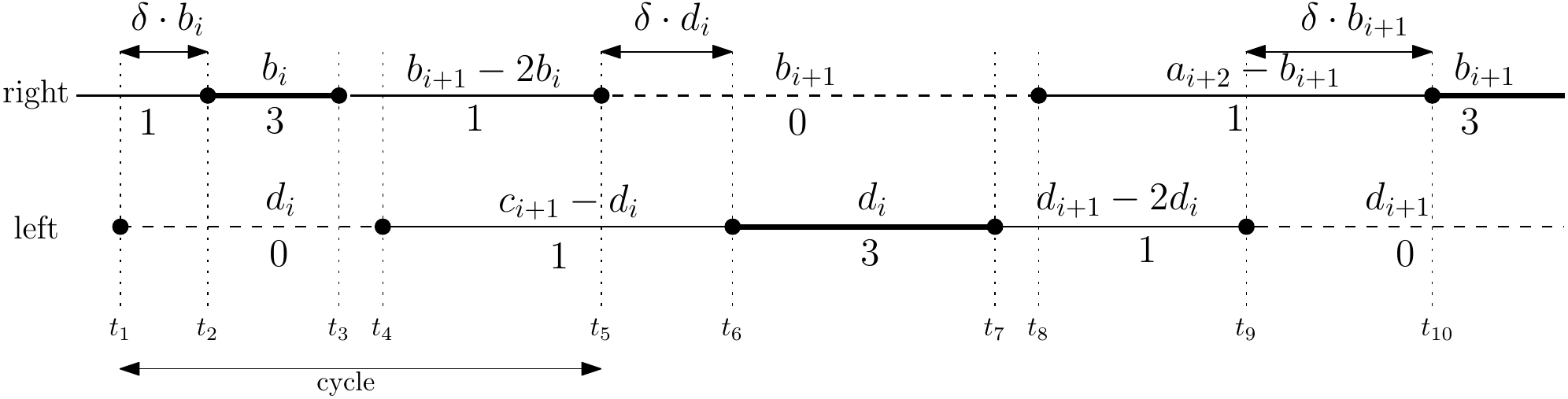}
	\caption{A general periodic interlacing of time intervals.}
	\label{fig:recursiveInterlacingImproved}
\end{figure}
Again, we set $d_i = \constGrowth \cdot b_i$ and $b_{i+1} = \constGrowth \cdot d_i$, for some $\constGrowth \geq 1$ determined below.
Then the value of the first local maximum can be expressed as
\begin{equation*}
\QF{}(t_1, t_3) 
	= \frac{\BF{}(t_1, t_3)}{t_3 - t_1}
	= \frac{1 \cdot (\constOffset \cdot b_i) + 3 \cdot b_i}{\constOffset \cdot b_i + b_i} 
	= \frac{\constOffset + 3}{\constOffset + 1} 
	= 1 + \frac{2}{\constOffset + 1}.
\end{equation*}
Considering the cycle from $t_1$ to $t_5$ in \autoref{fig:recursiveInterlacingImproved}, we can conclude that $c_{i+1} = b_{i+1} - b_i + \constOffset b_i + \constOffset d_i$.
Similarly, we can proceed on the interval from $t_5$ to $t_9$ to express $a_{i+1}$ in terms of \constGrowth, \constOffset and $b_i$.

Using these identities, we obtain for the second local maximum
\begin{equation*}
\def\arraystretch{1.4}
\begin{array}{rl}
\QF{}(t_1, t_5) 
	& = \frac{\BF{}(t_1, t_5)}{t_5 - t_1}
		= \frac{1 \cdot (\constOffset \cdot b_i) + 3b_i + 1\cdot (b_{i+1} - 2 b_i ) + 1 \cdot (c_{i+1} - d_i - \constOffset \cdot d_i)}{d_i + (c_{i+1} - d_i) - \constOffset \cdot d_i}\\
	& = \frac{c_{i+1} - \constOffset \cdot d_i}{c_{i+1} - \constOffset \cdot d_i} + \frac{b_{i+1} + \constOffset \cdot b_i + b_i - d_i}{c_{i+1} - \constOffset \cdot d_i}
	  = 1 + \frac{b_{i+1} - b_i + \constOffset \cdot b_i + 2b_i - d_i}{b_{i+1} - b_i + \constOffset b_i}\\
	& = 2 + \frac{2 b_i - d_i}{b_{i+1} - b_i + \constOffset b_i} = 2 + \frac{2 - \constGrowth}{\constGrowth^2 - 1 + \constOffset}.
\end{array}
\end{equation*}
As mentioned above, we set both local maxima to be equal, solve for \constOffset and obtain
\begin{equation*}
\delta = \frac{1}{2} \left( \constGrowth - \constGrowth^2 + \sqrt{-12 + 4 \constGrowth + 5 \constGrowth^2 - 2 \constGrowth^3 + \constGrowth^4}\right).
\end{equation*}
Plugging this into either one of the two local maxima and minimizing the resulting function for $\constGrowth\geq 1$, we obtain
\begin{equation*}
\constGrowth = \frac{3}{2} + \frac{1}{6} \left( 513 - 114 \sqrt{6} \right)^{\nf{1}{3}} + \frac{\left( 19 (9 + 2 \sqrt{6}) \right)^{\nf{1}{3}}}{2 \cdot 3^{\nf{2}{3}}} \approx 4.06887 
\end{equation*}
for the optimal value of \constGrowth, $\constOffset \approx 1.2802$ and
\begin{equation*}
v = \frac{1}{6} \left( 10 - \frac{19^{\nf{2}{3}}}{\sqrt[3]{2 (4 + 3 \sqrt{6})}} + \frac{\sqrt[3]{19 (4 + 3 \sqrt{6}))}}{2^{\nf{2}{3}}} \right) \approx 1.8771 
\end{equation*}
as the minimum speed.

Note that the optimal value for \constGrowth satisfies our conditions given in \autoref{equation:conditionsOfRecursion}, so that the \crews can in fact be realized.
Finally, we give suitable values to get the interlacing started:
\begin{equation*}
\begin{array}{r l  r l}
\B_1  		& := 1
	& \D_1  		&:= 2 \B_1\\
\startWall 	& := \frac{(4\constGrowth + 2 \delta + 1) - v(2\constGrowth + \delta + 1)}{v} \cdot \B_1
	& \A_1 & := \C_1 := \startWall\\[1em]
\A_2  		& := (\delta + 1) \cdot \B_1
	& \C_2  		&:= (2 \constGrowth + 3 \delta - 1) \cdot \B_1\\
\A_{i+1} &:= (\delta - 1) \D_i + (\constGrowth + \delta) \B_{i+1}
	& \hspace{1cm}\B_{i+1} &:= \constGrowth \cdot \D_i\\ 
\C_{i+1} &:= (\delta - 1) \B_i + (\constGrowth + \delta) \D_i
	& \D_{i+1} &:= \constGrowth \cdot \B_{i+1}.\\ 
\end{array}
\end{equation*}
To keep the expression simple, we fixed the value of $\B_1$ and scaled the value of $\startWall$ as listed above.
These values can be rescaled to work for any given \startWall.
\begin{theorem}
The fire can be contained in the upper half-plane with speed $v = 1.8772$.
\end{theorem}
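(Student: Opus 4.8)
The plan is to realise the shifted interlacing of \autoref{fig:recursiveInterlacingImproved} as an explicit \crews and to bound $\QF{}(t)$ along the lines of the proof of $v=\nf{17}{9}$. First I fix the growth factor \constGrowth and define all vertical \walls by $\D_i=\constGrowth\,\B_i$ and $\B_{i+1}=\constGrowth\,\D_i$; since \constGrowth will turn out to exceed $2$, this already yields $\B_{i+1}>2\B_i$ and $\D_{i+1}>2\D_i$, so \autoref{lemma:prereq} and the structure of the $0$-intervals it entails are available. The horizontal pieces are then \emph{defined} to be exactly the lengths dictated by the interval bookkeeping of \autoref{fig:recursiveInterlacingImproved}, namely $\C_{i+1}=\B_{i+1}-\B_i+\constOffset\B_i+\constOffset\D_i$ and the symmetric expression for $\A_{i+1}$ obtained from the cycle $[t_5,t_9]$. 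I would check that these satisfy $\A_{i+1}\ge\B_i$ and $\C_{i+1}\ge\D_i$, so that \autoref{equation:conditionsOfRecursion} holds and the \crews is geometrically realisable, and that with the shift \constOffset the $3$-interval of one side lies entirely inside the $0$-interval of the other, as claimed.

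Next comes the analysis of one period. Inside each $k$-interval the \burn is linear, so $\QF{}(t)$ restricted to a cycle is a ratio of two affine functions of the elapsed time and hence monotone between consecutive interval endpoints; thus its maxima lie at such endpoints, and the shift turns the single local maximum of the $\nf{17}{9}$ construction into two, at $t_3$ and $t_5$ of \autoref{fig:recursiveInterlacingImproved}. Using the closed forms above one computes
\begin{equation*}
\QF{}(t_1,t_3)=1+\frac{2}{\constOffset+1}
\qquad\text{and}\qquad
\QF{}(t_1,t_5)=2+\frac{2-\constGrowth}{\constGrowth^2-1+\constOffset}.
\end{equation*}
I equate the two, solve for \constOffset as a function of \constGrowth, substitute back, and minimise the resulting one-variable expression over $\constGrowth\ge1$; this gives $\constGrowth\approx4.069$, $\constOffset\approx1.280$, and a common maximal cycle \ratio $v^{\ast}\approx1.8771$, which is strictly smaller than the value obtained from a single balanced maximum.

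To pass from one cycle to all times I argue by induction exactly as for $\nf{17}{9}$: if $\QF{}(t_1)\le v^{\ast}$ at the start of a cycle, then $\QF{}(t_5)=\frac{t_1}{t_5}\QF{}(t_1)+\frac{t_5-t_1}{t_5}\QF{}(t_1,t_5)\le v^{\ast}$ restores the bound at the end, after which the two sides exchange roles and the step repeats. It remains to start the interlacing: I would give explicit initial values for $\B_1$, $\D_1$, $\A_1=\C_1=\startWall$, $\A_2$ and $\C_2$ (with the head start \startWall scaled accordingly, as in the displayed list of starting values) for which the first cycle already has both local maxima equal to $v^{\ast}$ and the segment after $t_2$ matches the periodic pattern, and check $\QF{}(t)\le v^{\ast}$ on the initial segment by evaluating the finitely many interval-endpoint values. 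Since every \wall scales with \startWall, the head start can be made arbitrarily small; and since $v^{\ast}<1.8772$, speed $1.8772$ is in particular sufficient.

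The main obstacle is the interval bookkeeping: one must verify that the shifted pattern of \autoref{fig:recursiveInterlacingImproved} really is reproduced period after period — that no interval endpoint other than $t_3$ and $t_5$ ever becomes the global maximiser of $\QF{}$, and that the closed forms used for $\A_{i+1},\C_{i+1}$ are consistent with \autoref{equation:conditionsOfRecursion} — and that the transient startup phase stays below $v^{\ast}$; the optimisation over \constGrowth and \constOffset afterwards is routine single-variable calculus. \qed
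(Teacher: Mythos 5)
Your proposal follows the paper's argument essentially step for step: the same parametrization $\D_i=\constGrowth\B_i$, $\B_{i+1}=\constGrowth\D_i$ with a shift \constOffset, the same two expressions $\QF{}(t_1,t_3)=1+\tfrac{2}{\constOffset+1}$ and $\QF{}(t_1,t_5)=2+\tfrac{2-\constGrowth}{\constGrowth^2-1+\constOffset}$, the same balance-and-minimise optimisation, the same inductive convexity argument to propagate the bound across cycles, and the same strategy of choosing explicit start-up values scaled with $\startWall$. It is the same proof, with the same pieces left as bookkeeping to verify.
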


\section{Conclusion}\label{section:conclusion}
We have shown non-trivial bounds for the problem of protecting the lower half-plane from fire with an infinite horizontal \wall.
Our results show that delaying barriers -- in this case vertical segments attached to the horizontal \wall -- can help to break the obvious upper bound of~2 for the building speed.
More complex delaying barriers, \eg free-floating ones, were not analysed specifically, however it is hard to imagine a way for those to have improving effects.
It will be interesting to see if such an effect can also be achieved for the problem of containing the fire by a closed barrier curve, \ie, for Bressan's original problem.
As a intermediate result in that direction, one ought to extend these results to the Euclidean metric first, where the effect of delaying barriers is less pronounced and harder to analyse. 

\paragraph*{Acknowledgements}
We thank the anonymous referees for their valuable input.

\footnotesize
\bibliographystyle{splncs04}
\bibliography{includes/bibliography}

\end{document}